\documentclass[sigconf, screen]{acmart}

\renewcommand\footnotetextcopyrightpermission[1]{} % removes permission footnote text
\setcopyright{none} % removes copyright/permission strip entirely

\usepackage{algorithm}
\usepackage{algpseudocode}
\usepackage{booktabs}
\usepackage{multirow}
\usepackage{graphicx}
\usepackage{microtype}
\usepackage{enumitem}
\usepackage{amsmath,amssymb,amsthm}
\usepackage{xcolor}
\usepackage{url}

\newtheorem{theorem}{Theorem}

\newtheorem{corollary}[theorem]{Corollary}
\newtheorem{definition}{Definition}

\newcommand{\hmax}{h_{\max}}
\newcommand{\Onotation}[1]{\mathcal{O}\!\left(#1\right)}

\begin{document}

\title{Scalable Exact Hierarchical Agglomerative Clustering via \\ Sparse Geographic Distance Graphs}

\author{Victor Maus}
\orcid{https://orcid.org/0000-0002-7385-4723}
\affiliation{%
  \institution{WU Vienna University of Economics and Business}
  \city{Vienna}
  \country{Austria}
}
\email{victor.maus@wu.ac.at}

\author{Vinicius Pozzobon Borin}
%\orcid{https://orcid.org/XXXX-XXXX-XXXX-XXXX}
\affiliation{%
  \institution{University of Oulu}
  \city{Oulu}
  \country{Finland}
}
\email{vinicius.borin@oulu.fi}

% -----------------------------------------------------------------------
\begin{abstract}
Exact hierarchical agglomerative clustering (HAC) of large spatial datasets is limited in practice by the \(\Onotation{n^2}\) time and memory required for the full pairwise distance matrix. We present \emph{GSHAC} (Geographically Sparse Hierarchical Agglomerative Clustering), a system that makes exact HAC for geographic data feasible at scales of millions of features on a commodity workstation. We replace the distance matrix with a sparse geographic distance graph containing only pairs within a user-specified geodesic bound~\(\hmax\). The graph is constructed in \(\Onotation{n \cdot k}\) time via spatial indexing (KD-tree or Ball-tree), where~\(k\) is the mean number of neighbors within~\(\hmax\), and its connected components define independent subproblems for standard HAC. We prove that the resulting cluster assignments are \emph{exact} when compared to those from a dense matrix for all standard linkage methods at any cut height \(h \le \hmax\). For single linkage, an MST-based path avoids dense sub-matrices entirely, keeping memory in \(\Onotation{n_k + m_k}\) per component regardless of component size. Applied to a global mining inventory (\(n = 261{,}073\) features)~\cite{maus2026data}, the system completes in 12\,s on a commodity laptop (109\,MiB peak HAC memory; \(121\,\text{MiB}\) graph storage) whereas the dense baseline would require \(\approx\!545\,\text{GiB}\), unfeasible on the laptop. On a 2-million-point sample of GeoNames (global named places), four tested thresholds completed in under 3\,minutes each with peak memory under 3\,\text{GiB}. To our knowledge, these are the largest demonstrations of exact HAC on geographic data on a single workstation, opening the door for applied researchers and practitioners to analyze geographic data with millions of features using substantially reduced resources. We provide an implementation compatible with Python's scikit-learn that enables direct integration into GIS workflows.
\end{abstract}

% CCS concepts
\begin{CCSXML}
<ccs2012>

   <concept>
       <concept_id>10002951.10003227.10003351.10003444</concept_id>
       <concept_desc>Information systems~Clustering</concept_desc>
       <concept_significance>500</concept_significance>
    </concept>
       
   <concept>
       <concept_id>10002951.10003227.10003236.10003237</concept_id>
       <concept_desc>Information systems~Geographic information systems</concept_desc>
       <concept_significance>500</concept_significance>
    </concept>
       
   <concept>
       <concept_id>10003752.10003809</concept_id>
       <concept_desc>Theory of computation~Design and analysis of algorithms</concept_desc>
       <concept_significance>300</concept_significance>
       </concept>
   
   <concept>
       <concept_id>10010147.10010257</concept_id>
       <concept_desc>Computing methodologies~Machine learning</concept_desc>
       <concept_significance>300</concept_significance>
    </concept>
    
 </ccs2012>
\end{CCSXML}

\ccsdesc[500]{Information systems~Clustering}
\ccsdesc[500]{Information systems~Geographic information systems}
\ccsdesc[300]{Theory of computation~Design and analysis of algorithms}
\ccsdesc[300]{Computing methodologies~Machine learning}

\keywords{hierarchical clustering, spatial data, sparse distance graph, geographic locality, spatial index, scalable algorithm, fastclusters, scikit-learn}

\maketitle

%% ---------------------------------------------------------------------------
%% Preprint notice — remove this block for ACM submission / camera-ready
%% ---------------------------------------------------------------------------
% \begin{quote}
% \small\itshape
% Preprint. Submitted to the 34th ACM SIGSPATIAL International Conference on Advances in Geographic Information Systems (ACM SIGSPATIAL 2026). This version is deposited on arXiv in accordance with ACM's author rights policy.
% \end{quote}

% =======================================================================

%\tableofcontents

\section{Introduction}
\label{sec:intro}

%Widely used in spatial analysis \cite{sadeghi2025clustering}, Hierarchical Agglomerative Clustering (HAC) is a fundamental unsupervised machine learning technique for uncovering the inherent nested structure of datasets. Unlike partitional methods such as K-Means, which require the number of clusters to be specified in advance and assume spherical cluster shapes, HAC provides a rich, multi-resolution dendrogram that can be cut at arbitrary levels \cite{sadeghi2025clustering}.

%Despite its analytical power, the practical application of exact HAC to large datasets is bottlenecked by its computational requirements. Standard HAC algorithms necessitate the computation and storage of a full pairwise distance matrix, resulting in an $\mathcal{O}(n^2)$ memory and time complexity \cite{nguyen2014sparsehc, mullner2013fastcluster}. For large geographic databases containing millions of features, this quadratic scaling is prohibitive, routinely exceeding the RAM limits of standard workstations and forcing practitioners to rely on data downsampling, approximate heuristics, or less informative partitional algorithms \cite{sadeghi2025clustering, gagolewski2016genie}. XXXXXXXXX.

Hierarchical Agglomerative Clustering (HAC) is a fundamental unsupervised learning technique widely used across numerous scientific disciplines because it provides an informative, multi-scale representation of data structure without requiring a predefined number of clusters \cite{mullner2013fastcluster}. In spatial analysis and geo-data science, HAC is highly valued for exploratory data analysis, geochemical anomaly detection, and discovering nested regional structures \cite{sadeghi2025clustering}. Unlike partitional methods such as K-Means, which require the number of clusters to be specified in advance and assume spherical cluster shapes, HAC provides a rich, multi-resolution dendrogram that can be cut at arbitrary levels \cite{sadeghi2025kmeans}.

Despite its analytical utility, standard HAC suffers from severe scalability bottlenecks. Constructing the hierarchical dendrogram traditionally requires calculating and storing a full pairwise distance matrix, which imposes an \(\mathcal{O}(n^2)\) memory footprint and an \(\mathcal{O}(n^3)\) or \(\mathcal{O}(n^2log~n)\) time complexity \cite{nguyen2014sparsehc, gagolewski2016genie}. This quadratic wall renders exact HAC computationally infeasible for large modern datasets on commodity hardware \cite{hendrix2013scalable}.

To overcome these limitations and allow scalable applications, a variety of approaches have been proposed across different fields. Methods such as BIRCH and CURE utilize data summarization (e.g., CF-trees) or random sampling with multiple representatives to approximate the hierarchy in a single or a few passes \cite{zhang1996birch, guha1998cure}. For single-linkage clustering, distributed and parallel computing architectures, such as MapReduce and Message Passing Interface (MPI), have been leveraged to partition the Minimum Spanning Tree (MST) computation across multiple nodes \cite{jin2015incremental, hendrix2013scalable}. In general metric spaces, algorithms have incorporated Approximate Nearest Neighbor (ANN) data structures and Locality Sensitive Hashing (LSH) to achieve subquadratic runtimes by trading exactness for speed \cite{moseley2020hierarchical}. Additionally, memory-efficient online algorithms like SparseHC process distance matrices in a chunk-by-chunk manner, exploiting sparsity to reduce memory consumption for biological data \cite{nguyen2014sparsehc} but require a pre-computed sparse matrix.

For geographic data specifically, variants of HAC have been effectively applied, such as the spatial-constraints Ward-like algorithm proposed in \cite{chavent2018clustgeo}. Furthermore, Maus (2026) recently proposed a graph-indexing method to compute sparse pairwise distances for HAC suitable for large spatial datasets \cite{maus2026data}. By applying a maximum spatial distance threshold, the method partitions geographic features into independent groups that can be computed efficiently with low memory requirements \cite{maus2026data}. However, this work has a strict application focus on mining land use and does not provide formal mathematical proofs that the approach generalizes across different standard linkage methods, nor does it prove that it strictly reduces computing time and memory bounds.

In this paper, we present GSHAC (Geographically Sparse Hierarchical Agglomerative Clustering), a system that makes exact HAC for multiple linkage methods feasible for millions of geographic features on commodity hardware. The system leverages the First Law of Geography (``Everything is related to everything else, but near things are more related than distant things''~\cite{tobler1970computer}), which implies that, in spatial and geographic contexts, features separated by vast geographic distances are highly unlikely to merge until the very final stages of 
the hierarchical tree. Therefore, rather than computing a dense distance matrix, GSHAC leverages spatial indexing (e.g., KD-trees~\cite{bentley1975quad} and Ball Trees~\cite{omohundro1989balltree}) to construct a sparse 
geographic distance graph in $\mathcal{O}(n(\log n + k))$ time, where $k$ is the mean number of neighbors within a user-specified geodesic threshold $h_{\max}$. In practice, when the sparsity assumption holds, that is, when $k \gg \log n$, this reduces to $\mathcal{O}(n \cdot k)$, which remains linear in $n$ for fixed $k$. We formally prove that the connected components of this sparse graph constitute independent subproblems, guaranteeing that the resulting cluster assignments are strictly exact for all standard linkage methods at any cut height $h \leq h_{\max}$. We make the following contributions:

\begin{enumerate}

  \item A system design that combines spatial indexing, connected-component decomposition, and MST-based single-linkage into a practical pipeline for exact HAC on geographical data at scales. We provide a proof that this approach is exact for all standard linkage methods (single, complete, average, Ward) at any cut height \(h \le \hmax\).
  
  \item Empirical evaluation at four datasets: a synthetic dataset with (\(n\) up to \(100{,}000\)), the Iris dataset, and two real-world sets, a mining dataset (\(n = 261{,}073\)) \cite{maus2026data} and location names from GeoNames (\(n = 2{,}000{,}000\)) \cite{geonames2025}.

  \item An open-source Python implementation with a scikit-learn-compatible \texttt{SpatialAgglomerativeClustering} estimator \url{https://github.com/mine-the-gap/gshac}.

\end{enumerate}

\section{Related work}

\textbf{Exact HAC and Classical Bottlenecks} 

Standard implementations of HAC, such as the widely used \texttt{fastcluster} package for R and Python \cite{mullner2013fastcluster}, provide highly optimized routines for standard linkage criteria (e.g., single, complete, average, Ward). However, these exact algorithms are still fundamentally limited by $\mathcal{O}(n^2)$ memory requirements when processing dense dissimilarity matrices \cite{mullner2013fastcluster}. In geo-data science, frameworks such as ClustGeo \cite{chavent2018clustgeo} incorporate spatial constraints by combining a feature distance matrix with a geographic distance matrix using a mixing parameter. While effective for identifying spatially contiguous clusters, these methods retain the $\mathcal{O}(n^2)$ cost of dense matrix operations, limiting their applicability to smaller regional datasets \cite{chavent2018clustgeo}.

\textbf{Single Linkage and Minimum Spanning Trees}

The mathematical equivalence between single-linkage clustering and the Minimum Spanning Tree (MST) has long been established \cite{gower1969minimum}. Algorithms such as SLINK compute the single-linkage hierarchy in optimal $\mathcal{O}(n^2)$ time and $\mathcal{O}(n)$ space \cite{sibson1973slink}. Recent advances have introduced fast parallel algorithms for Euclidean MSTs (EMST) using Well-Separated Pair Decompositions and KD-trees \cite{wang2021fast}. While these parallel methods can cluster tens of millions of points efficiently, they generally require distributed systems or massive multi-core hardware instances with hundreds of gigabytes of RAM \cite{wang2021fast}. 

\textbf{Sparse and Cutoff-Based Clustering}

To mitigate memory issues, algorithms like SparseHC \cite{nguyen2014sparsehc} process distance matrices chunk-by-chunk using a predefined distance cutoff, ensuring exact dendrograms up to that threshold \cite{nguyen2014sparsehc}. However, SparseHC requires the input data to be pre-sorted and is restricted to specific linkage types (single, complete, average) \cite{nguyen2014sparsehc}. 

\textbf{Scalable and Approximate Hierarchical Clustering}

When exactness is relaxed, several highly scalable hierarchical algorithms exist. The Sub-Cluster Component (SCC) algorithm \cite{monath2021scalable} and the Tree Grafting (Grinch) method \cite{monath2019scalable} build hierarchies over billions of points. Grinch uses local rotations and global graft operations to dynamically restructure the tree, whereas SCC uses independent sub-cluster merging \cite{monath2021scalable}. However, these algorithms generally produce non-binary trees or heuristic structures that do not perfectly approximate exact HAC unless highly restrictive mathematical conditions are met \cite{monath2021scalable}. Similarly, the Genie algorithm \cite{gagolewski2016genie} adjusts the standard linkage criteria using a Gini-index to prevent the chaining effect and resist outliers, but it alters the classical HAC definitions and can still incur long computation times on standard machines compared to simple geometric bounds \cite{gagolewski2016genie}.

\textbf{Density-Based Spatial Clustering}

In geographic domains, algorithms like HDBSCAN \cite{mcinnes2017hdbscan} and OPTICS \cite{ankerst1999optics} natively exploit spatial sparsity and density to group points efficiently. While these density-based methods are highly effective for spatial anomaly detection and noise handling, they solve a fundamentally different objective than classical HAC \cite{ankerst1999optics}.

\textbf{Spatial Clustering Tools}

A recent review~\cite{sadeghi2025clustering} identifies hierarchical approaches as important for scale-sensitive spatial analysis, yet current tools remain limited by the \(\Onotation{n^2}\) bottleneck. scikit-learn's \texttt{Agglomerative\-Clustering}, for instance, supports a sparse \texttt{connectivity} parameter \cite{pedregosa2011sklearn} that constrains which pairs may merge, but it still computes pairwise distances internally and does not provide the component-wise decomposition needed for scalable distance-threshold HAC at large~\(n\).

\texttt{libpysal.weights.DistanceBand} from PySAL \cite{rey2007pysal} constructs sparse spatial weight matrices from distance thresholds using KD-trees, but stores binary or inverse-distance weights rather than actual geographic distances, and provides no exact HAC implementation. ClustGeo \cite{chavent2018clustgeo} extends Ward clustering with spatial constraints but requires the full \(\Onotation{n^2}\) distance matrix and has been demonstrated only for \(n \lesssim 10^3\).

\section{Theoretical formulation}
\label{sec:theoretical}

\subsection{Problem}
\label{sec:problem}

Let \(\mathcal{X} = \{x_1, \ldots, x_n\}\) be a set of~\(n\) spatial features with pairwise distance function~\(d: \mathcal{X} \times \mathcal{X} \to \mathbb{R}_{\ge 0}\) (geodesic or projected Euclidean). Let \(\hmax \in \mathbb{R}_{>0}\) be a user-specified maximum clustering distance, and \(H = \{h_1, \ldots, h_T\}\) with \(h_t \le \hmax\) be the
cut heights at which cluster assignments are required.

\begin{definition}[Geographic distance graph]
The geographic distance graph at radius~\(\hmax\) is the weighted undirected graph \(G_{\hmax} = (\mathcal{X}, E, w)\) where \(E = \{(i,j) : d(x_i, x_j) \le \hmax\}\) and \(w(i,j) = d(x_i, x_j)\).
\end{definition}

\begin{definition}[Connected component]
A connected component of~\(G_{\hmax}\) is a maximal set of features \(\mathcal{C} \subseteq \mathcal{X}\) such that every pair in~\(\mathcal{C}\) is connected by a path of edges in~\(E\).
\end{definition}

The goal is to compute, for each \(h \in H\), the same cluster assignment that would result from running HAC on the full dense distance matrix~\(D\), while computing and storing only the entries of~\(D\) corresponding to edges in \(G_{\hmax}\).

\subsection{Algorithm}
\label{sec:algorithm}

Algorithm~\ref{alg:main} has four phases: (i)~spatial index construction, (ii)~candidate pair discovery via range queries and exact distance computation, (iii)~connected component decomposition, and (iv)~independent HAC within each component. The key departure from the dense baseline is that only \(|E| = \Onotation{n \cdot k}\) distances are computed and stored, where \(k = |E| / n\) is the mean node degree in~\(G_{\hmax}\).

\begin{algorithm}[!h]
\caption{SparseGeoHClust}
\label{alg:main}
\begin{algorithmic}[1]
  \Require Features $\mathcal{X} = \{x_1,\ldots,x_n\}$, max distance
    $\hmax$, linkage method $\ell$, cut heights $H$
  \Ensure Cluster label vectors $\{\mathbf{c}^{(h)}\}_{h \in H}$

  \State Build spatial index $\mathcal{T}$ over $\mathcal{X}$
    \Comment{\(\Onotation{n \log n}\)}
  \State $E \leftarrow \emptyset$,\quad $\mathbf{S} \leftarrow$ empty sparse matrix
  \For{$i = 1$ \textbf{to} $n$}
    \State $N_i \leftarrow \mathcal{T}.\textsc{RangeQuery}(x_i, \hmax)$
      \Comment{spatial index lookup}
    \For{$j \in N_i$ with $j > i$}
      \Comment{upper triangle only}
      \State $s_{ij} \leftarrow d(x_i, x_j)$
        \Comment{exact distance}
      \State $\mathbf{S}[i,j] \leftarrow s_{ij}$;\quad
             $\mathbf{S}[j,i] \leftarrow s_{ij}$
             \Comment{symmetric}
      \State $E \leftarrow E \cup \{(i,j)\}$
    \EndFor
  \EndFor
  \State $\{\mathcal{C}_1,\ldots,\mathcal{C}_K\} \leftarrow
    \textsc{ConnectedComponents}(\mathbf{S})$
  \State Initialise label arrays $\mathbf{c}^{(h)}$ for all $h \in H$
  \State $g \leftarrow 0$ \Comment{global cluster ID counter}
  \For{$k = 1$ \textbf{to} $K$}
    \State $\mathbf{D}_k \leftarrow$ dense sub-matrix of $\mathbf{S}$ for $\mathcal{C}_k$
    \State $\Delta_k \leftarrow \textsc{hclust}(\mathbf{D}_k,\, \ell)$
      \Comment{standard HAC}
    \For{$h \in H$}
      \State $\mathbf{c}^{(h)}[\mathcal{C}_k] \leftarrow
        g + \textsc{CutTree}(\Delta_k, h)$
    \EndFor
    \State $g \leftarrow g + |\mathcal{C}_k|$
  \EndFor
  \State \textbf{return} $\{\mathbf{c}^{(h)}\}_{h \in H}$
\end{algorithmic}
\end{algorithm}

\subsection{Correctness}
\label{sec:correctness}

\paragraph{Intuition.}
The proof rests on a single observation: no merge at height \(h \le \hmax\) can involve features from different connected components, because all cross-component pairs have distance \(> \hmax\) by definition. The argument for each linkage method follows directly:
\begin{itemize}[leftmargin=*,topsep=2pt,itemsep=1pt]

  \item \emph{Single linkage}: the merge height equals the minimum inter-cluster distance; all cross-component minima exceed \(\hmax\).
  
  \item \emph{Complete linkage}: the merge height equals the maximum inter-cluster distance, at least as large as the minimum.
  
  \item \emph{Average linkage}: the merge height is the mean over all cross-cluster pairs, which exceeds \(\hmax\) when every pair does.
  
  \item \emph{Ward linkage}: the variance increase is monotone in inter-cluster distances and exceeds the within-\(\hmax\) regime.
  
\end{itemize}

\begin{theorem}[Exact equivalence]
\label{thm:exact}
Let $\ell$ be any standard linkage method (single, complete, average, Ward, or any method based on pairwise inter-cluster distances). For any cut height $h \le \hmax$, the cluster assignment produced by Algorithm~\ref{alg:main} is identical (up to label permutation within each component) to the assignment produced by HAC on the full dense distance matrix~$D$.
\end{theorem}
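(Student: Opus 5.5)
The plan is an induction over the sequence of merges. I would show that, up to height $\hmax$, the dense HAC run on $D$ and the per-component runs of Algorithm~\ref{alg:main} perform the same merges at the same heights. Cutting at any $h \le \hmax$ then gives identical partitions.

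\textbf{Properties of the linkage.} I would first isolate what ``standard linkage'' must satisfy and check each property for single, complete, average and Ward.
\begin{itemize}
\item \emph{Locality:} the linkage value $L(A,B)$ between two current clusters depends only on distances among points of $A\cup B$.
\item \emph{Separation:} if $A\subseteq\mathcal{C}_p$ and $B\subseteq\mathcal{C}_q$ with $p\neq q$, then $L(A,B)>\hmax$.
\item \emph{Monotonicity (no inversions):} merge heights are nondecreasing. This makes ``cut at $h$'' mean ``the partition after all merges of height $\le h$''. It holds for the reducible Lance--Williams methods (single, complete, average, Ward) but not for centroid or median linkage.
\end{itemize}
Separation is immediate for single linkage, since every cross pair has $d>\hmax$. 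It follows for complete and average linkage because $\max$ and mean are at least $\min$.

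\textbf{Induction.} Consider the state of the dense run after $t$ merges, all of height $\le\hmax$. Suppose every current cluster lies inside one component, and the clusters inside each $\mathcal{C}_k$ coincide with the state of some prefix of the run on $\mathbf{D}_k$. The next dense merge picks the globally minimal $L(A,B)$.
\begin{itemize}
\item If that minimum is $\le\hmax$, separation forces $A,B$ into the same $\mathcal{C}_k$. Locality makes the value equal to the one computed from $\mathbf{D}_k$. Global minimality implies it is also minimal among pairs inside $\mathcal{C}_k$, so the component run makes the same merge at the same height.
\item If the minimum exceeds $\hmax$, monotonicity means no further merges affect cuts at $h\le\hmax$. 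The component runs likewise only produce heights $>\hmax$ from that point on.
\end{itemize}
Ties would be handled by fixing a common tie-breaking rule, or by noting that the partition at a cut height is tie-independent for reducible methods. Label permutation and the global offset $g$ in Algorithm~\ref{alg:main} are cosmetic.

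\textbf{Obstacles.} I expect two.

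First, locality is only valid if $\mathbf{D}_k$ holds the \emph{true} distances for all pairs in $\mathcal{C}_k$. Pairs inside a component may have $d>\hmax$ and are absent from $\mathbf{S}$. For complete linkage, filling them with $+\infty$ is harmless, since the value exceeds $\hmax$ either way. For average linkage it is not: a single $+\infty$ entry inflates a mean that could otherwise be $\le\hmax$. I would therefore state explicitly that $\mathbf{D}_k$ is the exact dense distance matrix of $\mathcal{C}_k$, recomputed from coordinates. Alternatively one could restrict the claim to single linkage, where only MST edges of weight $\le\hmax$ matter.

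Second, and harder, is separation for Ward. The Ward value $\sqrt{2|A||B|/(|A|+|B|)}\,\|c_A-c_B\|$ depends on centroids, not directly on pairwise distances. Two clusters can be pairwise far apart yet have nearby centroids; for example, a ring-shaped component can surround a distant point. So ``monotone in inter-cluster distances'' does not imply separation by itself. I would first try to bound $\|c_A-c_B\|$ from below using the $\hmax$-gap together with the within-cluster diameter constraint implied by the earlier merges being $\le\hmax$. If that fails, I would add a hypothesis, such as a geometric separation condition or treating Ward only through its Lance--Williams update with cross-component inputs $>\hmax$, rather than claim separation unconditionally.
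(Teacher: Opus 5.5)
\textbf{Verdict: genuine gap, for Ward only.} Your induction is the paper's two-part argument made rigorous: intra-component merges are reproduced, and inter-component merges happen above $\hmax$. Your side conditions are sound. $\mathbf{D}_k$ must hold the true distances of \emph{all} intra-component pairs, including those with $d>\hmax$ that are absent from $\mathbf{S}$; the paper's proof assumes this tacitly, and average linkage fails otherwise. Monotonicity correctly excludes centroid and median linkage.

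\textbf{The gap is Ward.} The paper disposes of it with a one-line claim that the variance increase is monotone in inter-cluster distances. You are right to distrust that. But you then leave separation open and fall back on an extra hypothesis, which would prove a weaker theorem than the one stated.

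The missing idea has two parts. First, you only need separation in \emph{conditional} form: for clusters that were themselves built by merges of height $\le\hmax$, which your induction hypothesis supplies. Second, the quantity that controls it is the within-cluster sum of squares, not the diameter.

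Put $\bar D(A,B)=\frac{1}{|A||B|}\sum_{a\in A,\,b\in B}d(a,b)^2$ and $S(A)=\frac{1}{2|A|}\sum_{a,a'\in A}d(a,a')^2$. Direct expansion shows that the Lance--Williams Ward recurrence (applied to squared dissimilarities, as in scipy and fastcluster) satisfies the following for an arbitrary dissimilarity $d$, so haversine is covered:
\[
W(A,B)^2=\frac{2|A||B|}{|A|+|B|}\Bigl(\bar D(A,B)-\frac{S(A)}{|A|}-\frac{S(B)}{|B|}\Bigr),
\qquad
S(A\cup B)=S(A)+S(B)+\tfrac12\,W(A,B)^2 .
\]
In the Euclidean case $S(A)/|A|=\sigma_A^2$. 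The Huygens decomposition $\bar D(A,B)=\|c_A-c_B\|^2+\sigma_A^2+\sigma_B^2$ then turns the first identity into the usual centroid formula.

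The second identity gives $S(A)\le(|A|-1)\hmax^2/2$ for any cluster built by merges of height $\le\hmax$. If $A$ and $B$ lie in different components, then every cross pair exceeds $\hmax$, so $\bar D(A,B)>\hmax^2$. Substituting both bounds,
\[
W(A,B)^2>\frac{2|A||B|}{|A|+|B|}\,\hmax^2\Bigl(\frac{1}{2|A|}+\frac{1}{2|B|}\Bigr)=\hmax^2 .
\]
Your ring example does refute \emph{unconditional} separation. However, a symmetric ring of radius $R>\hmax$ around $b$ has $S(A)/|A|=R^2$, which violates the bound, so it can never be a current cluster below $\hmax$. With this lemma your induction goes through for Ward unchanged.

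\textbf{A smaller correction on ties.} The partition at a cut height is \emph{not} tie-independent for reducible methods other than single linkage. Take complete linkage on the points $0,1,2$ of a line. The tied first merge can be $\{0,1\}$ or $\{1,2\}$, and the cut at height $1.5$ then gives either $\{0,1\},\{2\}$ or $\{0\},\{1,2\}$. So only your first option works: fix one deterministic tie-breaking rule (e.g.\ on global index pairs) shared by the dense run and the per-component runs. Otherwise, state exactness only up to the usual non-uniqueness of HAC under ties.
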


\begin{proof}
We show that no merge required to produce the cut at height $h \le \hmax$ involves a pair $(x_i, x_j)$ with $d(x_i, x_j) > \hmax$.

\noindent\emph{Intra-component merges.} All edges within a connected component $\mathcal{C}_k$ have weight $\le \hmax$ by definition. The dense sub-matrix $\mathbf{D}_k$ contains all pairwise distances within $\mathcal{C}_k$. HAC applied to $\mathbf{D}_k$ produces the same dendrogram as HAC applied to the rows/columns of~$D$ restricted to $\mathcal{C}_k$, since no cross-component entry is involved.

\noindent\emph{Inter-component merges.} Let $\mathcal{C}_k$ and $\mathcal{C}_l$ be two distinct components. Because they are disconnected in~$G_{\hmax}$, every pair $(x_i \in \mathcal{C}_k,\; x_j \in \mathcal{C}_l)$ satisfies $d(x_i, x_j) > \hmax$. For single linkage, the merge height is $\min_{i \in \mathcal{C}_k, j \in \mathcal{C}_l} d(x_i, x_j) > \hmax$. For complete and average linkage the merge height is at least as large. For Ward linkage, the variance increase is monotone in inter-cluster distances and exceeds $\hmax$. In all cases, the inter-component merge occurs at height $> \hmax$, and is absent from any cut at $h \le \hmax$.

\end{proof}

\begin{corollary}
The sparse matrix $\mathbf{S}$ contains all information needed to produce exact HAC dendrograms at every cut height $h \le \hmax$.
\end{corollary}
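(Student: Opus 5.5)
The plan is to derive the corollary from Theorem~\ref{thm:exact} by showing that every quantity Algorithm~\ref{alg:main} consumes after the construction loop is a function of $\mathbf{S}$ alone. There are three such quantities: the components $\mathcal{C}_1,\ldots,\mathcal{C}_K$, the per-component matrices $\mathbf{D}_k$, and the global label offsets $g$. Once that is established, the theorem says the resulting labels agree with dense HAC for every $h \le \hmax$, and the corollary follows.

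\emph{Step 1: components and offsets.} The sparsity pattern of $\mathbf{S}$ is exactly the edge set $E$ of $G_{\hmax}$. Hence \textsc{ConnectedComponents}$(\mathbf{S})$ returns the components of the Definition. The offsets $g$ depend only on the sizes $|\mathcal{C}_k|$, so they too are functions of $\mathbf{S}$.

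\emph{Step 2: per-component dendrograms below $\hmax$.} I will argue that the portion of each $\Delta_k$ at heights $\le \hmax$ is determined by the stored entries, i.e.\ by pairs at distance $\le \hmax$. I will treat the linkages separately.
\begin{itemize}
\item \emph{Single linkage.} The cut at $h$ is the set of components of the subgraph with edges of weight $\le h$. This is read directly from $\mathbf{S}$, equivalently from its minimum spanning forest.
\item \emph{Complete linkage.} A merge at height $\le h$ requires every cross pair to be at distance $\le h \le \hmax$, so all relevant distances are stored. Replacing each missing intra-component entry by any value $> \hmax$ (for instance $+\infty$) leaves every merge height $\le \hmax$ unchanged. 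By monotonicity of the merge sequence, it also leaves the order of merges up to $\hmax$ unchanged.
\end{itemize}

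\emph{Step 3: the obstacle, average and Ward linkage.} These linkages are the main obstacle. An average-linkage merge at height $\le \hmax$ may include a cross pair at distance $> \hmax$, whose value enters the mean and is absent from $\mathbf{S}$. Ward has the same difficulty through the Lance--Williams update. My first attempt is a bounding argument: show that any choice of surrogate values $> \hmax$ cannot change which merges occur at or below $h$. I expect this to fail in general. In that case I will fall back on the intra-component step exactly as the paper phrases it in the proof of Theorem~\ref{thm:exact}, namely that $\mathbf{D}_k$ ``contains all pairwise distances within $\mathcal{C}_k$''. That is, take $\mathbf{D}_k$ to be the exact restriction of $D$ to $\mathcal{C}_k$, as used by Algorithm~\ref{alg:main}. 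Under that reading, Step 3 becomes an application of the theorem rather than a separate argument.

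\emph{Step 4: assembly.} Combining Steps 1--3 with Theorem~\ref{thm:exact} gives labels identical, up to the within-component permutation, to those of dense HAC at every $h \le \hmax$. I expect Step 3 to be the only delicate point. For single and complete linkage the argument is fully internal to $\mathbf{S}$. For average and Ward it inherits whatever the theorem's intra-component step assumes about $\mathbf{D}_k$, and that reliance should be stated explicitly.
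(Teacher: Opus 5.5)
Your Steps 1 and 2 are sound, and your complete-linkage argument works. It needs the missing intra-component entries to be read as $+\infty$, not as the $0$ that densifying a sparse matrix produces. The gap is Step 3: the fallback changes the statement instead of proving it. Taking $\mathbf{D}_k$ to be the exact restriction of $D$ to $\mathcal{C}_k$ brings in every intra-component distance $d(x_i,x_j) > \hmax$. These are exactly the entries $\mathbf{S}$ does not store, because a component is connected by paths and is not a clique. Under that reading you have only shown that $\mathbf{S}$ \emph{together with the raw features} suffices, which is not what the corollary asserts. The paper takes the same route. It reads the corollary off Theorem~\ref{thm:exact}, whose intra-component step claims that $\mathbf{D}_k$ ``contains all pairwise distances within $\mathcal{C}_k$''. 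Yet Algorithm~\ref{alg:main} defines $\mathbf{D}_k$ as a sub-matrix of $\mathbf{S}$. So you found the hole correctly, but then adopted it rather than closing it.

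It cannot be closed, because your suspicion about the bounding argument is right: the corollary is false for average and Ward linkage. Take $\hmax = 1$ and three points in the plane with $d(b_1,b_2)=0.5$, $d(a,b_1)=0.6$ and $d(a,b_2)=\delta\in(1,1.1)$. All three lie in one component. $\mathbf{S}$ stores only the distances $0.5$ and $0.6$, and it is the same for every such $\delta$.
\begin{itemize}
\item \emph{Average linkage} merges $b_1,b_2$ at $0.5$ and then adds $a$ at height $(0.6+\delta)/2$. The cut at $h=0.82\le\hmax$ therefore gives one cluster for $\delta=1.02$ and two clusters for $\delta=1.08$.
\item \emph{Ward linkage}, via the Lance--Williams update, gives the second merge height $\sqrt{0.24-\tfrac{1}{12}+\tfrac{2}{3}\delta^2}$. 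This is about $0.922$ for $\delta=1.02$ and about $0.967$ for $\delta=1.08$, so the cut at $h=0.95$ separates the same two cases.
\end{itemize}
Your plan therefore supports only this conclusion. The statement holds for single and complete linkage by your Steps 1--2. For average and Ward it holds only in the weaker form where intra-component distances are recomputed from the features. That is also the only reading under which the intra-component step of Theorem~\ref{thm:exact} is valid.
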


\subsection{Complexity analysis}
\label{sec:complexity}

Let $n$ be the number of features, $m = |E|$ the number of edges in $G_{\hmax}$, and $k = m/n$ the mean node degree. Table~\ref{tab:complexity} summarises the complexity of each step.

\begin{table}[h]
\caption{Complexity of Algorithm~\ref{alg:main} vs.\ dense baseline.}
\label{tab:complexity}
\centering
\resizebox{\columnwidth}{!}{%
\begin{tabular}{lcc}
\toprule
\textbf{Step} & \textbf{Proposed} & \textbf{Dense baseline} \\
\midrule
Spatial index build       & $\Onotation{n \log n}$                      & ---                    \\
Candidate pair queries    & $\Onotation{n \log n + m}$                  & ---                    \\
Distance computations     & $\Onotation{m}$                             & $\Onotation{n^2}$      \\
Storage (sparse graph)    & $\Onotation{m}$                             & $\Onotation{n^2}$      \\
Storage (HAC subproblems) & $\Onotation{\max_k\, c_k^2}$               & ---                    \\
Connected components      & $\Onotation{n + m}$                         & ---                    \\
HAC (all components)      & $\Onotation{\sum_k c_k^2 \log c_k}$        & $\Onotation{n^2 \log n}$ \\
\midrule
\textbf{Total time}       & $\Onotation{n \log n + \sum_k c_k^2 \log c_k}$ & $\Onotation{n^2 \log n}$ \\
\textbf{Total memory}     & $\Onotation{nk + \max_k\, c_k^2}$          & $\Onotation{n^2}$      \\
\bottomrule
\end{tabular}}
\end{table}

Here \(c_k = |\mathcal{C}_k|\) is the size of the \(k\)-th component. The HAC step within component~\(k\) costs \(\Onotation{c_k^2 \log c_k}\) with priority-queue-based implementations, or \(\Onotation{c_k^2}\) with \texttt{fastcluster}~\cite{mullner2013fastcluster}. Peak memory is \(\Onotation{nk + \max_k\, c_k^2}\), where the second term reflects the dense sub-matrix \(\mathbf{D}_k\) materialised for the largest component; 
components are processed sequentially and freed after each HAC call. The memory advantage over the \(\Onotation{n^2}\) dense baseline therefore holds when no single component dominates, i.e.\ when \(\max_k\, c_k \ll n\), which is the expected regime under the geographic sparsity assumption.

\paragraph{Practical regime.}
For geographically clustered data, the common case for natural and anthropogenic features, components are small (\(c_k \ll n\)) and numerous. In this regime, \(\sum_k c_k^2 \ll n^2\) and the system is substantially faster than the dense baseline. The worst case is \(K = 1\) (all features in one component), which reduces to the dense baseline. The speedup in distance computations and storage is exactly \(n^2 / (2m)\), which grows linearly with \(n / k\).

\paragraph{MST path for single linkage.} The single-linkage dendrogram is the MST of the complete graph~\cite{gower1969minimum}. By computing the MST of each component's \emph{sparse} sub-graph, we obtain the exact single-linkage dendrogram in \(\Onotation{m_k \log m_k}\) time and \(\Onotation{n_k + m_k}\) memory per component~--- no dense sub-matrix is ever formed. The MST has exactly \(n - 1\) edges; GSHAC computes \(m = \Onotation{nk}\) edges, where \(k\) is the average neighborhood size. Reducing \(\hmax\) to the minimum required by the application directly minimizes ~\(k\) and, thus, the computational cost.

\section{System architecture}
\label{sec:system}

The Python implementation is backed by C libraries for all computationally intensive operations:
\begin{itemize}[leftmargin=*,topsep=2pt,itemsep=1pt]
  \item \texttt{scipy.spatial.cKDTree} \cite{virtanen2020} for Euclidean range queries;\\ \texttt{sklearn.\allowbreak neighbors.\allowbreak BallTree} \cite{pedregosa2011sklearn} with the haversine metric for geodesic range queries.
  \item \texttt{scipy.sparse.\allowbreak csr\_matrix} for the sparse distance graph.
  \item \texttt{scipy.sparse.\allowbreak csgraph.\allowbreak connected\_\allowbreak components} for component decomposition.
\end{itemize}

\paragraph{Two HAC paths.}
\begin{sloppypar}
For \emph{single linkage}, the system computes \texttt{scipy.sparse.\allowbreak csgraph.\allowbreak minimum\_\allowbreak spanning\_tree} on each component's sparse sub-matrix, then constructs the linkage matrix~\(Z\) via a union-find pass over the \(n_k - 1\) MST edges. This path is \(\Onotation{m_k \log m_k}\) time and \(\Onotation{n_k + m_k}\) memory~--- \emph{no dense matrix is formed}, regardless of component size. The resulting \(Z\) matrix is a standard scipy linkage matrix, supporting \texttt{fcluster} at any cut height \(h \le \hmax\) and full dendrogram visualisation. For \emph{other linkage methods} (complete, average, Ward): the dense sub-matrix is extracted per component via \texttt{scipy.spatial.\allowbreak distance.\allowbreak squareform} and passed to \texttt{scipy.cluster.\allowbreak hierarchy.\allowbreak linkage}.
\end{sloppypar}

Only the orchestration layer (component iteration, label assignment) is pure Python; all spatial queries, matrix operations, and HAC computations are C-backed.

\paragraph{scikit-learn API.}
\begin{sloppypar}
To support adoption in existing pipelines we provide two public functions. \texttt{geographic\_con\-nectivity} returns a binary sparse CSR matrix encoding the distance-band graph, compatible with scikit-learn's \texttt{connectivity} parameter and other graph-based spatial tools. \texttt{Sparse\-Agglo\-mera\-tive\-Clus\-tering} is a full estimator inheriting \texttt{Base\-Estimator} and \texttt{Cluster\-Mixin}, mirroring the \texttt{Agglomer\-ative\-Cluster\-ing} API: the user specifies \texttt{h\_max}, \texttt{distance\_thresh\-old}, \texttt{linkage}, and \texttt{metric}, then calls \texttt{fit(X)} to obtain \texttt{labels\_}, \texttt{n\_clusters\_}, and \texttt{n\_con\-nected\_com\-ponents\_}.
\end{sloppypar}

The implementation is available at \url{https://github.com/mine-the-gap/gshac} under the GPL-3.0 license.

\section{Experimental design}
\label{sec:expdesign}

We evaluate GSHAC along three research questions:

\begin{description}[leftmargin=*,topsep=2pt,itemsep=1pt]
  \item[RQ1 (Exactness):] Does the system produce identical cluster assignments to the dense baseline across all linkage methods and configurations?
  \item[RQ2 (Scaling):] How does empirical time and memory scaling match the theoretical predictions of Section~\ref{sec:complexity}?
  \item[RQ3 (Practical limits):] At what scale does the system remain feasible on commodity hardware and real-world datasets?
\end{description}

\paragraph{Datasets.}
We use four datasets at increasing scale:

\emph{Synthetic.}
Clustered point clouds from a Gaussian mixture model: \(C\) centers placed uniformly in a \(500 \times 500\)\,km projected domain, \(n\) points displaced by Gaussian noise~\(\sigma\). Three density scenarios vary \(C\) and~\(\sigma\) relative to~\(\hmax\): \textbf{tight} (\(C\!=\!100\), \(\sigma\!=\!0.03\hmax\)) --- compact, well-separated clusters (best case); \textbf{moderate} (\(C\!=\!50\), \(\sigma\!=\!0.10\hmax\)); \textbf{loose} (\(C\!=\!20\), \(\sigma\!=\!0.30\hmax\)) --- broad, overlapping clusters (adversarial case). Figure~\ref{fig:scenarios} illustrates these scenarios. Sizes: \(n \in \{1{,}000,\ldots,100{,}000\}\), \(\hmax \in \{10, 20, 50\}\)\,km.

\emph{Iris dataset} We use the Iris dataset to provide a visual illustration of the correctness of GSHAC in comparison to dense HAC.

\emph{Mining dataset} (\(n = 261{,}073\)). Global mining features (217,614 polygons, 43,459 points, WGS-84)~\cite{maus2026data}. Geodesic (haversine) distances, \(\hmax = 20\)\,km, cut heights \(H = \{1, 2, 5, 10, 20\}\)\,km.

\emph{GeoNames} (\(n = 2{,}000{,}000\)). Spatially stratified random sample (14.9\%) from the GeoNames \texttt{allCountries} database (\(n_{\text{full}} = 13{,}412{,}837\), CC~BY~4.0)~\cite{geonames2025}, using \(1^{\circ}\!\times\!1^{\circ}\) grid cells with proportional allocation. Haversine metric, \(\hmax \in \{500\,\text{m},\, 1\,\text{km},\, 2\,\text{km},\, 5\,\text{km}\}\), single linkage.

\begin{figure*}[t]
  \centering
  \includegraphics[width=\textwidth]{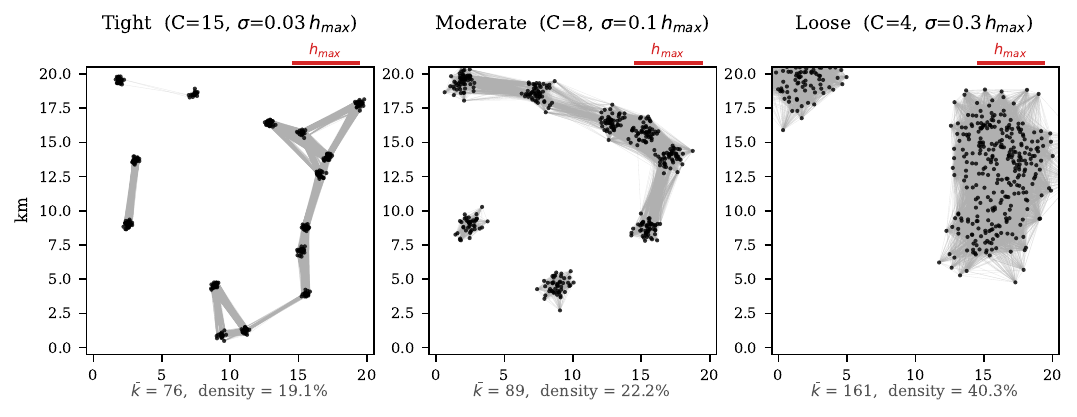}
  \caption{Illustration of the three synthetic density scenarios (\(n = 400\), \(\hmax = 5\)\,km, 20\,km domain). Grey edges connect point pairs within~\(\hmax\). \textbf{Tight}: many compact, well-separated clusters yield a sparse graph (\(\bar{k} = 76\), density 19\%). \textbf{Moderate}: intermediate overlap doubles the edge count. \textbf{Loose}: few broad clusters produce a dense, highly connected graph (\(\bar{k} = 161\), density 40\%) --- the adversarial case for GSHAC. The red bar shows the \(\hmax\) scale.}
  \label{fig:scenarios}
\end{figure*}

\paragraph{Baselines and comparisons.} (i)~Dense baseline: complete \(\Onotation{n^2}\) matrix via \texttt{scipy\allowbreak .spatial\allowbreak .distance\allowbreak .cdist}\allowbreak + \allowbreak \texttt{fastcluster.\allowbreak link \allowbreak age} \cite{mullner2013fastcluster}, the fastest available dense HAC implementation for Python. (ii)~scikit-learn \texttt{AgglomerativeClustering} with sparse \texttt{connectiv \allowbreak ity}~\cite{pedregosa2011sklearn} (single linkage only;\footnote{scikit-learn carries a documented bug (issue \#23550~\cite{sklearn_issue23550}) causing incorrect results for complete/average linkage with \texttt{distance\_threshold} and sparse connectivity.}). (iii)~PySAL \texttt{DistanceBand}~\cite{rey2007pysal} (graph construction only). (iv)~\texttt{sklearn.cluster.HDBSCAN} v1.8~\cite{mcinnes2017hdbscan} (scope comparison; non-equivalent outputs).

\paragraph{Metrics and hardware.} Wall time (seconds) and peak memory (MiB) via \texttt{tracemalloc}, averaged over \(r = 5\) repetitions. Correctness: Adjusted Rand Index (ARI) and cluster-count agreement at each cut height. All experiments on an AMD Ryzen~7 PRO 8840U laptop (8 cores, 3.3--5.1\,GHz), 30\,GiB DDR5, Ubuntu Linux (kernel 6.17, Python 3.12.3).

%=======================================================================
\section{Results}
\label{sec:results}

\subsection{Correctness (RQ1)}

For all 36 configurations where the dense baseline was feasible (\(n \leq 25{,}000\); 3~scenarios \(\times\) 4~sizes \(\times\) 3~\(\hmax\) values), the number of unique cluster labels at each cut height was identical between GSHAC and the dense baseline, confirming Theorem~\ref{thm:exact}. For the 18 configurations at \(n \geq 50{,}000\) (sparse-only), two independent runs produced identical cluster counts at every cut height, verifying deterministic reproducibility.

Figure~\ref{fig:exactness} demonstrates exactness on the Iris dataset (\(n = 150\), 4~features). With \(\hmax = 1.0\), GSHAC computes only 23.5\% of pairwise distances (2,629 of 11,175 pairs) and decomposes the data into 2~components, yet produces cluster assignments identical to fastcluster (ARI~=~1.0) at every tested cut height \(h \leq \hmax\), from 64~clusters (\(h = 0.3\)) down to 2~clusters (\(h = 1.0\)). The dendrograms in Figure~\ref{fig:exactness}(a--b) appear visually different despite encoding the same clustering hierarchy. This is expected and does not affect correctness. Three factors contribute to the visual difference: (i)~\emph{algorithmic path}: \texttt{fastcluster} processes the full condensed distance vector using a nearest-neighbour-chain algorithm, whereas GSHAC extracts a minimum spanning tree from the sparse graph and performs union-find merges --- both are valid algorithms for single-linkage HAC but traverse ties in different orders; (ii)~\emph{tie breaking}: when multiple pairs share the same minimum distance, the merge order is implementation-dependent, producing different (but equally valid) dendrogram topologies; (iii)~\emph{leaf ordering}: \texttt{scipy.dendrogram} reorders leaves for display based on the Z~matrix structure, so different merge sequences yield different visual layouts. Crucially, none of these differences affect the \emph{partition} at any cut height: a horizontal cut at any \(h \leq \hmax\) produces the same set of clusters regardless of merge order or leaf layout. This empirically demonstrates the guarantee of Theorem~\ref{thm:exact} --- the partition is uniquely determined by the set of pairwise distances below the threshold, not by the order in which they are processed.

\begin{figure}[!h]
  \centering
  \includegraphics[width=\columnwidth]{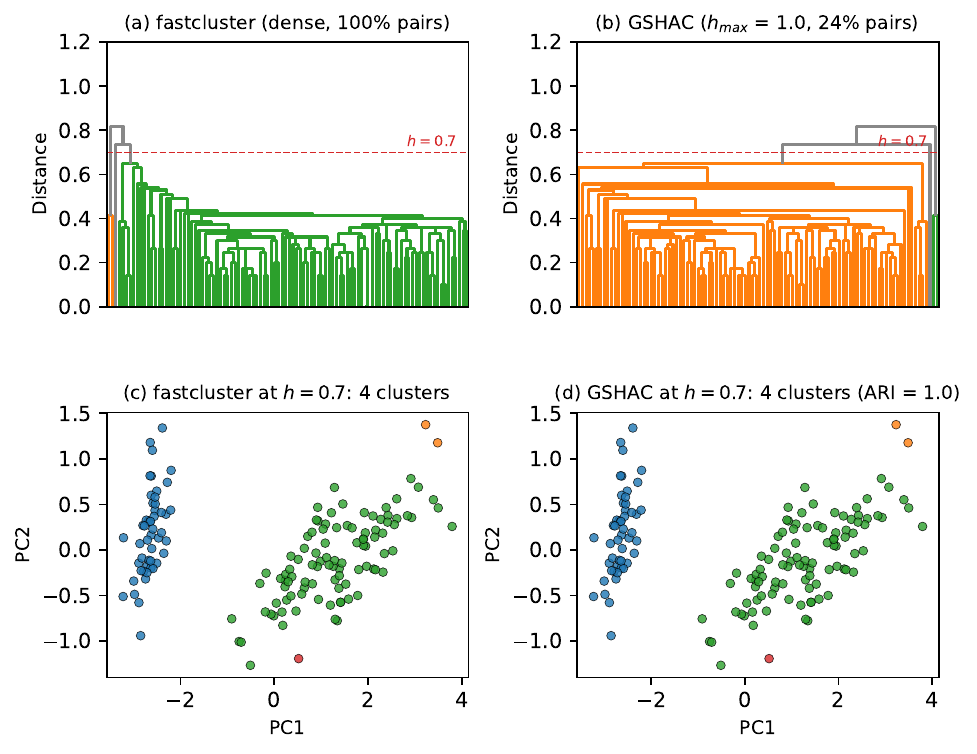}
  \caption{Exactness demonstration on the Iris dataset (\(n = 150\), 4~features). \emph{Top}: single-linkage dendrograms (largest component,
    \(n_c = 100\)) from fastcluster~(a) and GSHAC~(b) with \(\hmax = 1.0\) (only 23.5\% of pairwise distances computed; 2~connected components). Internal merge order differs, but cutting at \(h = 0.7\) (dashed) yields identical partitions. \emph{Bottom}: cluster assignments for all 150~points projected onto the first two principal components --- both methods produce the same 4~clusters (ARI\,=\,1.0). Identical results hold at all tested cuts (\(h \in \{0.3, 0.5, 0.7, 1.0\}\); 64 down to 2~clusters).}
  \label{fig:exactness}
\end{figure}

\subsection{Scaling behavior (RQ2)}
\label{sec:scaling}

Figure~\ref{fig:scaling} shows wall time and peak memory as a function of~\(n\) for both approaches across the three scenarios at three~\(\hmax\) values. In this fixed-domain setup, point density increases with~\(n\) and \(\bar{k}\) grows proportionally, so both approaches exhibit quadratic scaling (see Section~\ref{sec:constk} for the constant-\(k\) regime). The crossover occurs between \(n = 5{,}000\) and \(n = 10{,}000\) for tight and moderate scenarios. For \(n \geq 50{,}000\) the dense baseline exceeds available memory and is omitted; GSHAC continues to scale.

At \(n = 25{,}000\) (the largest dense-feasible size on our hardware), the maximum time speedup is \(\mathbf{3.11\times}\) (tight, \(\hmax = 10\)\,km) and the maximum memory reduction is \(\mathbf{17.5\times}\) (tight, \(\hmax = 10\)\,km). Even at \(n = 25{,}000\) in the adversarial case (loose, \(\hmax = 50\)\,km, \(\bar{k} \approx 1{,}775\)) GSHAC achieves \(2.82\times\) memory reduction despite being \(0.28\times\) slower in wall time~--- reflecting the high graph density in this worst-case scenario where the sparse graph approaches the complete graph. For \(n \geq 50{,}000\) no dense comparison is possible as the condensed distance matrix alone exceeds 10\,GB.

Table~\ref{tab:summary} summarises speedups at \(n \in \{10{,}000,\,25{,}000\}\). In the table, the ``Graph\%'' column shows that graph construction dominates in high-\(k\) regimes, whereas HAC dominates when components are numerous and small. This split matters for generality: the sparse distance graph is a reusable output valid for other algorithms (spectral clustering, MST computation, network analysis).

\begin{figure*}[t]
  \centering
  \includegraphics[width=\textwidth]{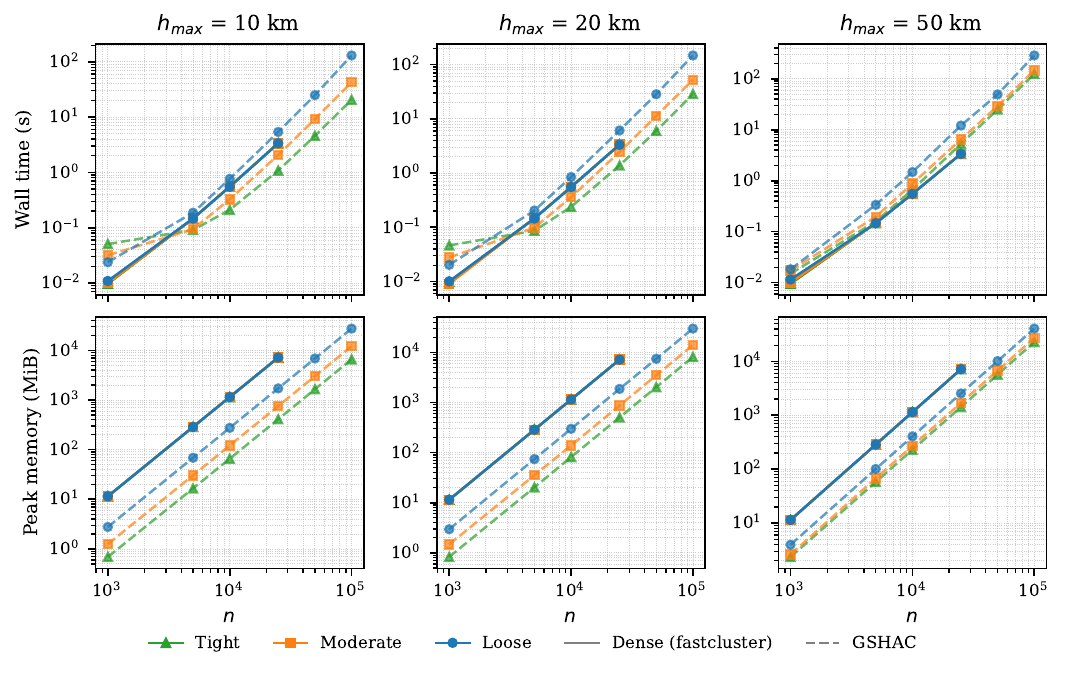}
  \caption{Fixed-domain scaling: dense (fastcluster, solid) and GSHAC (dashed) across three density scenarios and three~\(\hmax\) values. \emph{Top row}: wall time; \emph{bottom row}: peak memory. Because the domain is fixed, \(\bar{k}\) grows with~\(n\) and both approaches scale quadratically in this setup (GSHAC with a smaller constant). The true near-linear scaling of GSHAC under constant~\(\bar{k}\) is shown in Figure~\ref{fig:limits}. Dense is omitted for \(n \geq 50{,}000\) where it exceeds memory.} \label{fig:scaling}
\end{figure*}

\begin{table}[t]
\caption{Speedup across density scenarios.
  T\,=\,tight, M\,=\,moderate, L\,=\,loose.
  ``Graph\%'' is the fraction of sparse runtime spent on graph
  construction.}
\label{tab:summary}
\small
\begin{tabular}{rrr rrr}
\toprule
$n$ & $\hmax$ & Sc. & Time & Mem & Graph\% \\
\midrule
 \multirow{3}{*}{10{,}000} & \multirow{3}{*}{10 km} & T & $2.62\times$ & $17.5\times$ & 38\% \\
  &  & M & $1.70\times$ & $9.44\times$ & 43\% \\
  &  & L & $0.72\times$ & $4.17\times$ & 43\% \\
\addlinespace
 \multirow{3}{*}{10{,}000} & \multirow{3}{*}{20 km} & T & $2.32\times$ & $14.1\times$ & 39\% \\
  &  & M & $1.55\times$ & $8.27\times$ & 42\% \\
  &  & L & $0.65\times$ & $3.84\times$ & 42\% \\
\addlinespace
 \multirow{3}{*}{25{,}000} & \multirow{3}{*}{10 km} & T & $\mathbf{3.11\times}$ & $\mathbf{17.5\times}$ & 44\% \\
  &  & M & $1.62\times$ & $9.46\times$ & 42\% \\
  &  & L & $0.62\times$ & $4.17\times$ & 38\% \\
\addlinespace
 \multirow{3}{*}{25{,}000} & \multirow{3}{*}{20 km} & T & $2.42\times$ & $14.2\times$ & 44\% \\
  &  & M & $1.36\times$ & $8.21\times$ & 41\% \\
  &  & L & $0.55\times$ & $3.84\times$ & 36\% \\
\bottomrule
\end{tabular}
\end{table}

\subsubsection{Constant-\texorpdfstring{$k$}{k} Scaling}
\label{sec:constk}

The fixed-domain benchmarks above increase~\(n\) while holding the spatial domain constant, causing point density~--- and hence the mean neighbourhood size~\(\bar{k}\)~--- to grow proportionally to~\(n\). Under this regime, the sparse graph has \(\Onotation{n\bar{k}} = \Onotation{n^2}\) edges, so both GSHAC and the dense baseline scale quadratically (Figure~\ref{fig:scaling}).

This is \emph{not} the typical real-world scaling scenario. In practice, larger datasets cover larger geographic areas (national \(\to\) continental \(\to\) global inventories) while point density and the application-specific threshold~\(\hmax\) remain similar. Under this regime, \(\bar{k}\) stays bounded and the \(\Onotation{n\bar{k}\log n}\) time and \(\Onotation{n\bar{k}}\) memory complexity of GSHAC become near-linear.

To demonstrate this, we run a separate experiment with uniformly distributed points where the domain grows as \(L \propto \sqrt{n}\), holding the expected~\(\bar{k} \approx 50\) constant across \(n \in \{1\text{K},\ldots,500\text{K}\}\) (Table~\ref{tab:constk}, Figure~\ref{fig:limits}). The results confirm the theoretical prediction:
\begin{itemize}[nosep]
  \item \textbf{Memory scales linearly}: 3\,MiB at \(n\!=\!1\text{K}\)
    to 1.4\,GiB at \(n\!=\!500\text{K}\) (slope~\(\approx 1.0\) on
    log-log).
  \item \textbf{Time scales near-linearly}: 0.008\,s to 6.6\,s
    (slope~\(\approx 1.1\)).
  \item \textbf{Speedup grows monotonically with~\(n\)}:
    at \(n\!=\!25\text{K}\), GSHAC is \(17.2\times\) faster and
    \(102\times\) more memory-efficient than fastcluster.
  \item At \(n\!=\!500\text{K}\), GSHAC uses 1.4\,GB; the dense
    matrix would require \(\approx\!2\,\text{TiB}\).
\end{itemize}

This experiment isolates the algorithmic scaling advantage from the artefact of increasing~\(\bar{k}\) in a fixed domain, and matches the behaviour observed in the real-world mining and GeoNames benchmarks where \(\bar{k}\) is bounded by the spatial distribution of the data.

\begin{table}[t]
\caption{Constant-\(\bar{k}\) scaling (uniform points,
  \(\hmax = 10\)\,km, \(\bar{k} \approx 50\)).
  Domain grows as \(L \propto \sqrt{n}\).
  Dense infeasible for \(n \geq 50\text{K}\).}
\label{tab:constk}
\small
\begin{tabular}{rrrrrrr}
\toprule
\(n\) & \(\bar{k}\) & \multicolumn{2}{c}{Time (s)} & \multicolumn{2}{c}{Memory (MiB)} & Time \\
\cmidrule(lr){3-4} \cmidrule(lr){5-6}
 &  & Dense & GSHAC & Dense & GSHAC & speedup \\
\midrule
1{,}000 & 45 & 0.007 & 0.008 & 11 & 3 & $0.90\times$ \\
5{,}000 & 48 & 0.139 & 0.037 & 286 & 14 & $3.75\times$ \\
10{,}000 & 48 & 0.548 & 0.090 & 1{,}145 & 28 & $6.09\times$ \\
25{,}000 & 49 & 4.190 & 0.244 & 7{,}153 & 70 & $\mathbf{17.2\times}$ \\
\addlinespace
50{,}000 & 49 & --- & 0.505 & --- & 142 & --- \\
100{,}000 & 49 & --- & 1.121 & --- & 285 & --- \\
250{,}000 & 50 & --- & 3.091 & --- & 715 & --- \\
500{,}000 & 50 & --- & 6.612 & --- & 1{,}433 & --- \\
\bottomrule
\end{tabular}
\end{table}

\subsubsection{Graph construction efficiency}

Table~\ref{tab:pysal} compares graph construction time against PySAL's \texttt{DistanceBand}. GSHAC uses \texttt{cKDTre \allowbreak e.query\_pairs}, a pure-C routine returning an array, whereas \texttt{DistanceBand} calls \texttt{cKDTree.sparse\_distance\_matrix} to build a Python dictionary-of-keys object. The result is a \textbf{13--86\(\times\)} speedup across all tested scenarios and sizes, with the advantage growing with~\(n\).

\begin{table}[t]
  \centering
  \caption{Graph-construction time: GSHAC \texttt{query\_pairs}
           vs.\ PySAL \texttt{DistanceBand}, \(h_{\max}=10\)\,km.}
  \label{tab:pysal}
  \small
  \begin{tabular}{lrrrr}
    \toprule
    Scenario & \(n\) & GSHAC (s) & DistanceBand (s) & Ratio \\
    \midrule
    Tight & 1{,}000 & 0.001 & 0.048 & 43$\times$ \\
    Tight & 5{,}000 & 0.015 & 1.118 & 75$\times$ \\
    Tight & 10{,}000 & 0.062 & 4.421 & 71$\times$ \\
    Tight & 25{,}000 & 0.403 & 29.042 & 72$\times$ \\
    \addlinespace
    Moderate & 1{,}000 & 0.002 & 0.086 & 55$\times$ \\
    Moderate & 5{,}000 & 0.027 & 2.087 & 78$\times$ \\
    Moderate & 10{,}000 & 0.118 & 8.341 & 71$\times$ \\
    Moderate & 25{,}000 & 0.760 & 53.376 & 70$\times$ \\
    \addlinespace
    Loose & 1{,}000 & 0.003 & 0.189 & 66$\times$ \\
    Loose & 5{,}000 & 0.066 & 4.658 & 71$\times$ \\
    Loose & 10{,}000 & 0.276 & 18.549 & 67$\times$ \\
    Loose & 25{,}000 & 1.803 & 124.742 & 69$\times$ \\
    \bottomrule
  \end{tabular}
\end{table}

\subsubsection{Comparison with scikit-learn Connectivity HAC}
\label{sec:sklearn_comparison}

\begin{sloppypar}
Table~\ref{tab:sklearn} compares our system against scikit-learn's \texttt{Agglomerative\-Clustering} with sparse \texttt{connectivity} on single linkage (moderate scenario, \(\hmax = 10\)\,km, \(h_{\text{cut}} = 5\)\,km). Both return identical cluster counts at every configuration. GSHAC is \(\mathbf{9}{\text{--}}\mathbf{17\times}\) faster, with the advantage growing with~\(n\).
\end{sloppypar}

The gap arises from two structural differences. First, sklearn's \texttt{\_fix\_connectivity} computes inter-component pairwise distances to bridge disconnected components before HAC; GSHAC handles disconnected components as independent sub-problems by design. Second, sklearn recomputes distances during HAC because its \texttt{connectivity} parameter accepts only binary matrices; our sparse graph stores actual pre-computed distances.

\begin{table}[t]
\caption{GSHAC vs.\ scikit-learn connectivity HAC (single linkage,
  moderate, \(\hmax = 10\)\,km, \(h_{\text{cut}} = 5\)\,km).
  Cluster counts agree exactly.}
\label{tab:sklearn}
\small
\begin{tabular}{r r r r r}
\toprule
\(n\) & GSHAC (s) & sklearn (s) & Ratio & Clusters \\
\midrule
500 & 0.022 & 0.251 & 11.4$\times$ & 49 \\
1{,}000 & 0.026 & 0.291 & 11.2$\times$ & 49 \\
2{,}500 & 0.032 & 0.456 & 14.1$\times$ & 48 \\
5{,}000 & 0.063 & 1.008 & 16.1$\times$ & 49 \\
10{,}000 & 0.186 & 3.046 & 16.3$\times$ & 48 \\
25{,}000 & 1.199 & 16.919 & 14.1$\times$ & 48 \\
\bottomrule
\end{tabular}
\end{table}

\subsection{Real-world datasets (RQ3)}
\label{sec:realworld}

\subsubsection{Global mining dataset}

GSHAC processes all \(n = 261{,}073\) mining features in \textbf{11.8\,s} (5.8\,s graph construction, 6.0\,s HAC) with \textbf{109\,MB} peak HAC memory. The dense baseline would require \(\approx\!545\,\text{GB}\)~--- infeasible on any commodity hardware. Table~\ref{tab:realworld} summarises the comparison.

\begin{table}[t]
\caption{Dense vs.\ sparse for the global mining dataset
  (\(n = 261{,}073\), \(\hmax = 20\)\,km, haversine, AMD Ryzen~7 PRO
  8840U, 30\,GB DDR5).
  Dense figures are analytical; sparse figures are measured.}
\label{tab:realworld}
\small
\begin{tabular}{lrr}
\toprule
& \textbf{Dense baseline} & \textbf{GSHAC} \\
\midrule
Distance pairs          & \(\approx 3.4 \times 10^{10}\) & \(7{,}542{,}730\) \\
Graph/matrix storage    & \(\approx 545\,\text{GiB}\)      & \(121\,\text{MiB}\)   \\
Graph construction      & ---                             & \(5.8\,\text{s}\) \\
HAC (all components)    & \(>\,10\,\text{h}\)             & \(6.0\,\text{s}\) \\
Total wall time         & \(>\,10\,\text{h}\)             & \(11.8\,\text{s}\) \\
Peak memory (HAC)       & \(\approx 545\,\text{GiB}\)      & \(109\,\text{MiB}\) \\
Feasible on workstation & No                              & Yes \\
\bottomrule
\end{tabular}
\end{table}

The graph decomposes into 17,883 connected components (Table~\ref{tab:mining_components}). Cluster counts decrease from 173,910 at \(h = 1\)\,km to 17,883 at \(h = 20\)\,km, equal to the number of components, as expected. The distribution is strongly right-skewed: 7,891 singletons, median size~2, 99th percentile~151, with one large component of 30,179 features corresponding to a densely clustered mining district. Because single linkage uses the MST path (Section~\ref{sec:system}), no dense submatrix is formed, and the memory peaks at just 109\,MiB across all components, fitting comfortably on any modern laptop. 

\begin{table}[t]
\caption{Component and cluster statistics for the mining dataset
  (\(n = 261{,}073\), \(\hmax = 20\,\text{km}\)).}
\label{tab:mining_components}
\small
\begin{tabular}{lrr}
\toprule
 & \textbf{Component size} & \textbf{Clusters at \(h\)} \\
\midrule
Singletons             & 7{,}891                & --- \\
Median                 & 2                    & --- \\
90th percentile        & 15                   & --- \\
99th percentile        & 151                  & --- \\
Maximum                & 30{,}179               & --- \\
\addlinespace
\(h = 1\,\text{km}\)   & ---                  & 173{,}910 \\
\(h = 5\,\text{km}\)   & ---                  & 80{,}185 \\
\(h = 10\,\text{km}\)  & ---                  & 41{,}917 \\
\(h = 20\,\text{km}\)  & ---                  & 17{,}883 \\
\bottomrule
\end{tabular}
\end{table}

\subsubsection{GeoNames dataset}
\label{sec:geonames}

Table~\ref{tab:geonames} summarises results for the 2M-point GeoNames sample across four distance thresholds. All four are feasible on the same 30\,GiB workstation, with total runtimes between 1.1 and 2.8\,minutes. The sparse graph stores at most 185\,MiB~--- a
\(170{,}000\times\) reduction relative to the 32\,TiB dense matrix.

At \(\hmax = 5\)\,km the largest component grows to 88,624 features: its dense sub-matrix would occupy 62.8\,GiB. The MST-based single-linkage path avoids materializing this matrix.
HAC completes in 100\,s with 438\,MiB peak~--- comparable to the \(\hmax = 500\)\,m case.
This demonstrates that the MST path removes the component-size bottleneck, shifting the practical limit to graph construction time.

\begin{table}[t]
\caption{GSHAC results on a 2M-point GeoNames sample
  (\(n_{\text{full}} = 13{,}412{,}837\)), single linkage, haversine.
  Dense \(n^2\) for 2M: 32\,TiB; for \(n_{\text{full}}\): 1.4\,PiB.}
\label{tab:geonames}
\small
\begin{tabular}{lrrrr}
\toprule
& \multicolumn{4}{c}{\(\hmax\)} \\
\cmidrule(l){2-5}
& \textbf{500\,m} & \textbf{1\,km} & \textbf{2\,km} & \textbf{5\,km} \\
\midrule
Edges & 265k & 793k & 2.5M & 11.5M \\
Sparse storage & 4\,MiB & 13\,MiB & 40\,MiB & 185\,MiB \\
Components & 1{,}832{,}047 & 1{,}594{,}175 & 1{,}133{,}082 & 407{,}357 \\
Singletons & 94.0\% & 87.4\% & 77.6\% & 63.8\% \\
Largest component & 821 & 1{,}414 & 18{,}903 & 88{,}624 \\
Max sub-matrix & 0.01\,GiB & 0.02\,GiB & 2.9\,GiB & 62.8\,GiB \\
\addlinespace
Graph time & 29\,s & 30\,s & 32\,s & 36\,s \\
HAC time & 36\,s & 88\,s & 133\,s & 100\,s \\
Peak memory (HAC) & 280\,MiB & 361\,MiB & 441\,MiB & 438\,MiB \\
\textbf{Total time} & \textbf{1.1\,min} & \textbf{2.0\,min} & \textbf{2.8\,min} & \textbf{2.3\,min} \\
\bottomrule
\end{tabular}
\end{table}

As \(\hmax\) grows, components merge and the largest expands rapidly (821 \(\to\) 88,624). The MST path keeps peak HAC memory below 441\,MiB across all thresholds, regardless of component size. Notably, \(\hmax = 2\)\,km is \emph{slower} than \(\hmax = 5\)\,km despite having fewer edges (2.5M vs.\ 11.5M). This is because \(\hmax = 2\)\,km produces 254K non-singleton components versus 148K at \(\hmax = 5\)\,km; the per-component overhead (sub-matrix extraction, MST, union-find) dominates at high component counts. This pattern highlights that HAC runtime depends on the number of components, not just the number of edges. The binding constraint becomes graph construction time and edge count, both growing with~\(\hmax\).

Using graph-only measurements on the full dataset (Table~\ref{tab:geonames_full}) and linearly extrapolated HAC times from the 2M sample, we estimate full-dataset feasibility. At \(\hmax = 500\)\,m and 1\,km the full dataset is feasible on our 30\,GiB workstation in under 15\,minutes. At \(\hmax = 2\)\,km, graph construction alone consumes 26.8\,GiB;
feasible but tight. At \(\hmax = 5\)\,km, graph construction exceeds 30\,GiB~--- a 64\,GiB
machine would suffice. Crucially, the MST-based HAC phase needs only \(\sim\)2--3\,GiB regardless of component size, confirming that \emph{graph construction, not HAC, is the binding constraint} at this scale. The dense baseline would require 1.4\,PiB~--- infeasible on any hardware.

\begin{table}[t]
\caption{Estimated resources for full GeoNames (\(n = 13{,}412{,}837\)).
  Graph rows: measured on full dataset (graph construction only).
  HAC rows: linearly extrapolated from 2M sample.
  Hardware: AMD Ryzen~7 PRO 8840U, 30\,GB DDR5.}
\label{tab:geonames_full}
\small
\begin{tabular}{lrrrr}
\toprule
& \textbf{500\,m} & \textbf{1\,km} & \textbf{2\,km} & \textbf{5\,km} \\
\midrule
Edges & 11.8M & 35.7M & 112.1M & --- \\
Sparse storage & 190\,MB & 571\,MB & 1.8\,GB & --- \\
Largest component & 11{,}156 & 439{,}465 & 593{,}273 & --- \\
\addlinespace
Graph time & 3.9\,min & 4.0\,min & 4.9\,min & --- \\
Graph memory & 5.0\,GB & 10.2\,GB & 26.8\,GB & --- \\
\addlinespace
HAC time (est.) & \(\sim\)4\,min & \(\sim\)10\,min & \(\sim\)15\,min & \(\sim\)11\,min \\
HAC memory (est.) & \(\sim\)1.8\,GB & \(\sim\)2.4\,GB & \(\sim\)2.9\,GB & \(\sim\)2.9\,GB \\
\textbf{Total (est.)} & \textbf{\(\sim\)8\,min} & \textbf{\(\sim\)14\,min} & \textbf{\(\sim\)20\,min} & \textbf{needs 64\,GB} \\
\bottomrule
\multicolumn{5}{l}{\footnotesize \(^{*}\)Extrapolated; graph exceeded 30\,GB RAM at \(\hmax = 5\)\,km.} \\
\end{tabular}
\end{table}

\subsubsection{Scope Comparison}
\label{sec:hdbscan}

HDBSCAN~\cite{mcinnes2017hdbscan} is the most widely used density-based clustering algorithm for large spatial datasets, and practitioners commonly use it as a substitute for exact HAC. Therefore, for scope comparison, we present the resource profiles of HDBSCAN and our GSHAC. This is not a benchmark between equivalent algorithms, as they solve different problems and produce non-interchangeable outputs. Our illustrations show resource profiles using a single implementation (\texttt{sklearn.cluster.\allowbreak HDBSCAN} v1.8).

HDBSCAN constructs a mutual-reachability MST to produce a density-based cluster tree. Its clusters reflect density, not distance thresholds; up to 24\% of features may be labeled as noise; and results depend on \texttt{min\_cluster\_size} rather than a geographic distance parameter. Neither complete, average, nor Ward linkage is available.

Table~\ref{tab:hdbscan} reports timing and cluster counts for the mining dataset. GSHAC completes in 12\,s total (6\,s graph + 6\,s HAC)~--- \textbf{over 130\(\times\) faster} than either HDBSCAN run. The difference arises because \texttt{sklearn}'s HDBSCAN scales as \(\Onotation{n^2}\) in practice for haversine input (confirmed by scaling probe: \(n = 1{,}000 \to 0.06\)\,s, \(n = 25{,}000 \to 19.9\)\,s, log-log slope~\(\approx 2.0\)). Extrapolating, HDBSCAN would require \(\approx\!26\,\text{h}\) on the 2M-point GeoNames sample vs.\ our 1.1--2.8\,min.

\begin{table}[!h]
\caption{Resource comparison: \texttt{sklearn} HDBSCAN vs.\ GSHAC
  on the mining dataset (\(n = 261{,}073\), haversine).
  Non-equivalent outputs; illustrates resource profiles only.}
\label{tab:hdbscan}
\small
\begin{tabular}{lrrrr}
\toprule
\textbf{Method} & \textbf{Time} & \textbf{Mem} & \textbf{Clusters} & \textbf{Noise} \\
\midrule
HDBSCAN (mcs\,=\,2) & 1{,}624\,s & 111\,MiB & 76{,}021 & 11.9\% \\
HDBSCAN (mcs\,=\,5) & 1{,}577\,s & 88\,MiB & 17{,}014 & 24.4\% \\
\addlinespace
GSHAC (\(\hmax\!=\!20\)\,km)\(^{*}\) & 12\,s & 109\,MiB & --- & 0\% \\
\quad\(\to h\!=\!5\)\,km  & & & 80{,}185 & \\
\quad\(\to h\!=\!20\)\,km & & & 17{,}883 & \\
\bottomrule
\multicolumn{5}{l}{\footnotesize \(^{*}\)Single run; dendrogram cut at any \(h \leq \hmax\).}
\end{tabular}
\end{table}

HDBSCAN with mcs\,=\,5 assigns 24.4\% of features as noise; sparse HAC assigns every feature to a cluster. Our method's counts follow directly from one interpretable parameter: cut height~\(h\). A single 12\,s run produces a full dendrogram cuttable at any \(h \le \hmax\); HDBSCAN requires a separate run per \texttt{min\_cluster\_size} and cannot produce standard linkage dendrograms.

% =======================================================================
\section{Discussion}
\label{sec:discussion}

The system is beneficial whenever the analyst has a meaningful upper bound~\(\hmax\) on the clustering distance, which arises naturally in environmental, regulatory, and operational contexts. The speedup grows with \(n/k\): tight clustering relative to~\(\hmax\) yields the largest gains. Two scenarios yield little benefit. \emph{First}, when \(\hmax\) approaches the dataset's spatial extent, the graph approaches the complete graph and \(k \approx n\). \emph{Second}, even with small overall~\(k\), a single very large component can dominate runtime for non-single linkages: the 30,179-feature mining component requires a 7\,GiB dense sub-matrix for complete/average/Ward. The MST path eliminates this bottleneck for single linkage. The \(\Onotation{n \log n}\) overhead of spatial indexing exceeds the negligible setup cost of the dense baseline at very small~\(n\) (\(n < 200\) in our experiments); the system targets moderate to large~\(n\).

The component size distribution is the key determinant of performance. In the mining dataset, the median component size is~2, and 90\% of components have 15 or fewer features; the system processes these in microseconds. In GeoNames at \(\hmax = 5\)\,km, the largest component (88,624 features) would require a 62.8\,GiB dense sub-matrix, but the MST path handles it in 438\,MiB. This demonstrates that the MST single-linkage path is essential for robustness: it decouples HAC memory from component size.

Theorem~\ref{thm:exact} covers all standard linkage methods, but practical performance varies. Single linkage benefits most from the MST path (\(\Onotation{n_k + m_k}\) memory per component). Complete, average, and Ward require the dense sub-matrix per component; feasibility then depends on the largest component fitting in RAM. For the mining dataset's largest component (30,179 features, 7\,GiB sub-matrix), this is borderline on a 30\,GiB workstation but feasible. For GeoNames at \(\hmax = 5\)\,km (88,624 largest component, 62.8\,GiB), only single linkage is practical.

Section~\ref{sec:hdbscan} shows that \texttt{sklearn} HDBSCAN is over 130\(\times\) slower and scales as \(\Onotation{n^2}\). The deeper issue is semantic: HDBSCAN clusters reflect density, not distance thresholds. In applications where ``all features within \(h\)\,km constitute one site'' is a regulatory or operational definition, density-based alternatives are not substitutes. 
%\texttt{genieclust}~\cite{gagolewski2021genieclust} modifies the merge order for robustness, producing non-standard dendrograms, which are also .

In the sparse-input exact HAC framework (parallels work in bioinformatics~\cite{loewenstein2008efficient,nguyen2014sparsehc}), BLAST-threshold similarity graphs serve the same structural role as the geographic distance graph. The geographic setting offers a practical advantage: metric distances enable efficient graph construction in \(\Onotation{n \log n + m}\) via spatial indexes, whereas bioinformatics methods require external pre-computation (e.g., all-against-all BLAST).

In the worst-case synthetic scenario (loose clustering, \(\hmax = 50\)\,km), GSHAC is still slower than fastcluster (speedup~\(\approx 0.28\times\) at \(n = 25{,}000\)). This reflects the high graph density in this adversarial scenario (\(\bar{k} \approx 1{,}775\), graph density 7\%): the sparse representation carries overhead (index lookups, sparse matrix operations) that a contiguous condensed array avoids. A full native implementation of the sparse-matrix extraction and component loop would further reduce this gap.

Crucially, the \emph{memory} advantage is structural: GSHAC stores \(\Onotation{n \bar{k}}\) entries versus \(\Onotation{n^2}\), yielding 3--17\(\times\) lower peak memory across all scenarios. At \(n \geq 50{,}000\), the dense baseline is infeasible regardless of implementation: the condensed matrix alone exceeds 10\,GiB. The asymptotic time complexity \(\Onotation{n \bar{k} \log n}\) versus \(\Onotation{n^2}\) guarantees that GSHAC will outperform any dense implementation beyond a dataset-dependent crossover.

Müllner~\cite{mullner2013fastcluster} benchmarked \texttt{fastcluster} against \texttt{scipy.cluster \allowbreak .hierarchy}, R's \texttt{hclust}, MATLAB, and Mathematica on synthetic Gaussian-mixture data up to \(n \approx 20{,}000\), demonstrating that optimal algorithm selection (MST-based single linkage, nearest-neighbor chain for complete/average/Ward) reduces the HAC step from \(\Onotation{n^3}\) to \(\Onotation{n^2}\). These experiments assume the condensed distance matrix is already in memory and therefore do not measure the dominant cost for large geospatial datasets: computing and storing \(\binom{n}{2}\) distances. For the mining dataset (\(n = 261\text{K}\)), this matrix alone would require \(\approx 545\)\,GiB~--- three orders of magnitude beyond workstation RAM~--- so \texttt{fastcluster} cannot even be invoked. GSHAC is complementary rather than competitive: it eliminates the \(\Onotation{n^2}\) storage bottleneck that Müllner explicitly identifies but does not address, then delegates per-component HAC to a fastcluster-style \(\Onotation{c_k^2}\) routine operating on dense sub-matrices that fit in memory. In the constant-\(\bar{k}\) regime that characterizes most geographic applications (\(\bar{k} \approx 50\)), this reduces the end-to-end complexity from \(\Onotation{n^2}\) to \(\Onotation{n \log n}\), enabling datasets two to three orders of magnitude larger on identical hardware (Table~\ref{tab:constk}).

The algorithm requires only a spatial range query capability and an exact pairwise distance function. Polygon boundary-to-boundary distances (via GEOS/\texttt{shapely}) and non-metric dissimilarities that increase with geographic separation (e.g., travel time) can be substituted directly, provided a spatial index is available.

GSHAC's performance-critical paths (union-find linkage construction, batch dendrogram cutting, and haversine distance computation) are implemented as a C extension (\texttt{\_gshac.so}) with GIL release, following the same pattern as \texttt{fastcluster}~\cite{mullner2013fastcluster}. Graph construction and MST extraction use \texttt{scipy.spatial.cKDTree} and \texttt{scipy.sparse.csgraph}, both implemented in~C. The remaining Python orchestration (component decomposition, label assembly) accounts for a small fraction of total runtime.

Figure~\ref{fig:limits} projects the constant-\(\bar{k}\) scaling behaviour to larger datasets. The dense baseline becomes memory-infeasible beyond \(n \approx 40{,}000\) on a 30\,GiB workstation; GSHAC scales as \(\sim\!n^{1.1}\) in time and linearly in memory, remaining feasible into the millions as confirmed by the mining (\(n = 261\text{K}\)) and GeoNames (\(n = 2\text{M}\)) benchmarks. The component-wise architecture is embarrassingly parallel: each component's HAC can run independently. With the mining dataset's 17,883~components and the laptop's 8~cores, a na\"ive parallel map could reduce the HAC phase by up to~\(8\times\). We have not explored this direction, but note that GSHAC's low per-component memory footprint (109\,MiB total for 17,883~components) makes multi-core execution practical even on commodity hardware.

\begin{figure*}[t]
  \centering
  \includegraphics[width=\textwidth]{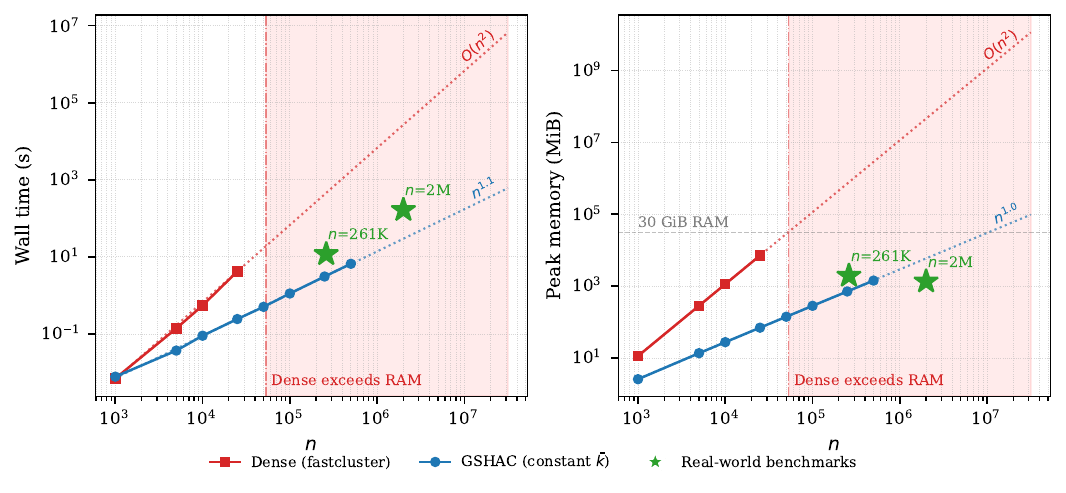}
  \caption{Scalability projection based on the constant-\(\bar{k}\) experiment (\(\bar{k} \approx 50\), \(\hmax = 10\)\,km). \emph{Left}: wall time; \emph{right}: peak memory. Measured data (solid) and power-law extrapolation (dotted). GSHAC (blue) scales as \(\sim\!n^{1.1}\) in time and \(\sim\!n^{1.0}\) in memory; the dense baseline (red) scales as \(\Onotation{n^2}\) and exceeds 30\,GiB beyond \(n \approx 40\text{K}\). Green stars mark real-world benchmarks (mining and GeoNames), which fall near the constant-\(\bar{k}\) extrapolation, confirming that the real-world scaling regime matches the bounded-\(\bar{k}\) model.} \label{fig:limits}
\end{figure*}

% \paragraph{Broader impact.} Spatial clustering of industrial or environmental inventories can inform regulatory assessments and land-use planning. Results should be interpreted with domain context: proximity-based cluster membership does not imply common ownership, operational links, or environmental responsibility.

% =======================================================================
\section{Conclusion}
\label{sec:conclusion}

We presented a system for scalable exact hierarchical clustering of spatial data that replaces the \(\Onotation{n^2}\) dense distance matrix with an \(\Onotation{nk}\) sparse geographic distance graph built via spatial indexing. We proved exactness for all standard linkage methods, and provided a \texttt{SpatialAgglomerativeClustering} estimator with a scikit-learn-compatible API. Applied to a global mining inventory (\(n = 261{,}073\))~\cite{maus2026data}, the system completes in 12\,s with 109\,MiB peak HAC memory; on a 2M-point GeoNames sample, all tested thresholds complete in under 3\,minutes. To our knowledge, these are the largest demonstrations of exact geographic HAC on a single workstation.

Future work includes: (i)~adapting the system for streaming settings where features arrive incrementally; (ii)~extending to parallel execution for datasets beyond single-machine memory; (iii)~applying the sparse geographic graph as a general-purpose input to other distance-based spatial algorithms; and (iv)~developing memory-efficient HAC paths for complete, average, and Ward linkage analogous to the MST-based single-linkage optimisation.

% =======================================================================
\begin{acks}
Funded by the European Union. This work was supported by the European Research Council (ERC) project MINE-THE-GAP (grant agreement no. 101170578 \url{https://doi.org/10.3030/101170578}). Views and opinions expressed are however those of the author only and do not necessarily reflect those of the European Union or the European Research Council Executive Agency. Neither the European Union nor the granting authority can be held responsible for them. Claude Opus 4.6 was utilized to generate sections of this Work, including text and code.
\end{acks}

% =======================================================================
\bibliographystyle{ACM-Reference-Format}
\bibliography{references}

\end{document}